\newtheorem{thm}{Theorem}
\newtheorem{lem}[thm]{Lemma}
\newtheorem{prop}[thm]{Proposition}
\theoremstyle{definition}
\newtheorem{defn}[thm]{Definition}
\theoremstyle{example}
\theoremstyle{remark}
\newtheorem{rem}[thm]{Remark}
\numberwithin{thm}{section}
\numberwithin{equation}{section}
\newcommand{\Acc}{\mathcal{A}}
\newcommand{\R}{\mathbb{R}}
\newcommand{\E}{\mathbb{E}}
\newcommand{\Prob}{\mathbb{P}}
\newcommand{\F}{\mathcal{F}}
\newcommand{\cadlag}{c\`{a}dl\`{a}g }
\newcommand{\Rspace}{\mathcal{R}}
\newcommand{\norm}[2]{\| #1 \|_{#2}}
\newcommand{\A}{\mathcal{A}}
\newcommand{\DT}{\mathrm{DT}}
\newcommand{\CED}{\mathrm{CED}}
\newcommand{\TM}{\mathrm{TM}}
\begin{document}

\title{The Temporal Dimension of Risk \\ \vspace{5pt} (\lowercase{\normalfont forthcoming in} {\emph{\scshape \itshape \small Journal of Risk}})} 
\author{Ola Mahmoud$^1$}
\address{$^1$Faculty of Mathematics and Statistics, University of St. Gallen (Switzerland) and Center for Risk Management Research, University of California, Berkeley (USA)}
\email{ola.mahmoud@unisg.ch, olamahmoud@berkeley.edu}
%\thanks{$^*$\textsc{Forthcoming in }\emph{Journal of Risk}}
\date{ \today}
\maketitle

\begin{abstract}
Multi-period measures of risk account for the path that the value of an investment portfolio takes. In the context of probabilistic risk measures, the focus has traditionally been on the magnitude of investment loss and not on the dimension associated with the passage of time. In this paper, the concept of temporal path-dependent risk measure is mathematically formalized to capture the risk associated with the temporal dimension of a stochastic process and its theoretical properties are analyzed. We then study the temporal dimension of investment drawdown, its duration, which measures the length of excursions below a running maximum. Its properties in the context of risk measures are analyzed both theoretically and empirically. In particular, we show that duration captures serial correlation in the returns of two major asset classes. We conclude by discussing the challenges of path-dependent temporal risk estimation in practice.
\end{abstract}

\section{Introduction}

Single-period measures of risk do not account for the path an investment portfolio takes. Since investment funds do not hold static positions, measuring the risk of investments should ideally be defined over random paths rather than random single-period gains or losses. Mathematically, a path-dependent measure of risk is a real valued function $\rho:\Rspace^\infty\to\R$ on the space of stochastic processes $\Rspace^\infty$ representing cumulative returns over a path of fixed length. Most existing path-dependent risk measures are essentially a measure of the spatial dimension of risk, that is the magnitude of investment loss or gain. However, by moving from the single-period to the multi-period framework, a second dimension becomes manifest, namely that of time. This temporal dimension to a stochastic process has traditionally not been incorporated into the probabilistic theory of risk measures pioneered by \cite{Artzner1999}. 

In this paper, we formalize the temporal dimension of multi-period risk as a \emph{temporal risk measure}.
For a given time horizon $T\in(0,\infty)$, a temporal risk measure is a path-dependent risk measure $\rho:\Rspace^\infty\to\R$, which first maps a stochastic process $X\in\Rspace^\infty$ to a random time $\tau$, which is a random variable taking values in the time interval $[0,T]$. This so-called temporal transformation is shift and scaling invariant and is hence invariant to the spatial dimension of the stochastic process. The random variable $\tau$ is meant to summarize a certain temporal behavior of the process $X$ that we are interested in. Then, a real-valued risk functional, such as deviation or tail mean, is applied to $\tau$, describing a risky feature of its distribution. We derive some properties of temporal risk measures and show that they are not coherent in the sense of \cite{Artzner1999}.

The second part of the paper focuses on one of the most widely quoted indicators of multi-period risk: drawdown, which is the decline from a historical peak in net asset value or cumulative return. In the event of a large drawdown, conventional single-period risk diagnostics, such as volatility or Expected Shortfall, are irrelevant and liquidation under unfavorable market conditions after an abrupt market decline may be forced. Since the notion of drawdown inherently accounts for the path over a given time period, it comes equipped with two dimensions: a spatial dimension (drawdown magnitude) and a temporal dimension (drawdown duration). 
While the magnitude component of drawdown has been extensively studied in the academic literature and is regularly used by the investment community, the temporal dimension, its duration, which measures the length of excursions below a running maximum, has not received the same kind of attention. In particular, even though it is a widely quoted  performance measure, a generally accepted mathematical methodology for forming expectations about future duration does not seem to exist in practice. 

To this end, we analyze the properties of drawdown duration theoretically, in the context of temporal risk measures, and empirically, by looking at some empirical duration distributions. We also show that duration risk is highly sensitive to serial correlation in asset returns and hence captures their temporal dependence. This insight may impact how certain portfolio construction strategies are perceived. For example, the returns of the popular momentum strategy (\cite{Chan1996}) are highly autocorrelated. Our work implies that such strategies would suffer from high duration risk.%\footnote{The effects of serial correlation and momentum strategies on drawdown and duration risk are currently being investigated by the author and her colleagues in a large-scale empirical study.}

In summary, the main contribution of this work is twofold:
\begin{itemize}
\item[(i)] First, we formalize the theory of temporal risk measures and analyze their properties.  We thereby introduce a new risk diagnostic complementing traditional ones, uniquely capturing the risk associated with the passage of time, and providing more path-wise information than standard risk measures. By incorporating the time dimension into the framework of risk measurement, one can in practice form expectations about future temporal risk. 
\item[(ii)] Second, we illustrate a practical example applying our theory of temporal risk measures. More specifically, we study the temporal dimension of drawdown, its duration. Drawdown duration is a widely quoted risk diagnostic in the investment management industry but has not been studied before in the context of path-dependent measures of risk. We hence formulate duration as a temporal measure of risk and verify its properties. We then derive some empirical properties of duration risk. In particular, we show that duration captures serial correlation in the returns of two major asset classes. 
\end{itemize}

\subsection{Synopsis}
We start in Section \ref{section:temporal_risk} by reviewing the probabilistic theory of path-dependent risk measures in a continuous-time setting. We then introduce the notion of temporal transformation, a spatial invariant random variable mapping of a stochastic process to a random time, and the notion of temporal acceptance family and show that these two constructs correspond bijectively. A temporal risk measure is then defined as a path-dependent risk measure which can be decomposed into a temporal transformation and a risk functional. We show that temporal measures of risk can never be coherent in the sense of \cite{Artzner1999}.
In Section \ref{section:duration}, the temporal dimension of drawdown is analyzed. We first review the spatial dimension of drawdown, its magnitude. Its temporal dimension, duration, captures the time it takes a stochastic process to reach a previous running maximum for the first time. Section \ref{section:duration_risk} includes an analysis of duration in terms of temporal measures of risk and an empirical analysis of the distribution of duration. We then show that duration risk captures temporal dependence in terms of serial correlation to a greater degree than traditional one-period risk measures. We conclude in Section \ref{section:conclusion} with a summary of our findings and discuss the challenges of path-dependent risk estimation in practice.

\subsection{Background literature}
We summarize work related to the probabilistic theory of path-dependent risk measures, and to the theoretical and practical analysis of the two dimensions of drawdown, its magnitude and duration.

\subsubsection{Path-dependent risk measures}
The seminal work of \cite{Artzner1999} introducing coherent risk measures is centered around single-period risk, where risk is measured at the beginning of the period and random loss or gain is observed at the end of the period. In \cite{Artzner2002, Artzner2007}, the framework of \cite{Artzner1999} is generalized to discrete-time multi-period models, and in \cite{Cheridito2004, Cheridito2005} representation results for coherent and convex risk measures were developed for continuous-time stochastic models. \cite{Riedel2004} defines the concept of dynamic risk measure, where dynamic risk assessment consists of a sequence of risk mappings and is updated as time evolves to incorporate new information. Such measures come equipped with a notion of dynamic consistency, which requires that judgements based on the risk measure are not contradictory over time (see also \cite{Bion2008, Bion2009} and \cite{Fasen2012}). Dynamic risk measures have been studied extensively over the past decade; see \cite{FoellmerPenner2006}, \cite{Cheridito2006}, \cite{Kloeppel2007}, and \cite{Fritelli2004}, amongst others.

We point out that the focus of the studies mentioned above is on the magnitude of losses and gains and not on the temporal behavior of the underlying process. To our knowledge, the notion of path-dependent risk measure capturing the temporal dimension of risk has not been formally developed in the academic literature.

\subsubsection{Drawdown magnitude and duration}
The analytical assessment of drawdown magnitudes has been broadly studied in the literature of applied probability theory (\citet{Taylor1975}, \citet{Lehoczky1977}, \citet{Douady2000}, \citet{Magdon2004}, \citet{Landriault2015a}, \citet{Mijatovic2012}, \citet{Zhang2010}, \citet{Hadjiliadis2006}, \citet{Pospisil2009}). The reduction of drawdown in active portfolio management has received considerable attention in mathematical finance research (\citet{Grossman1993}, \citet{Cvitanic1995}, \citet{Chekhlov2003, Chekhlov2005}, \citet{Krokhmal2003}, \citet{Carr2011},  \citet{Cherney2013}, \citet{Sekine2013}, \citet{Zhang2013}, \citet{Zhang2015},
\citet{Uryasev2014a}, \citet{Pospisil2010}). 
In the context of probabilistic risk measurement, which is our main interest in this paper, \citet{Chekhlov2003, Chekhlov2005} develop a quantitative measure of drawdown risk called Conditional Drawdown at Risk (CDaR), and \cite{GoldbergMahmoud2014a} develop a measure of maximum drawdown risk called Conditional Expected Drawdown (CED). Both risk measures, CDaR and CED, are deviation measures (\citet{Rockafellar2002, Rockafellar2006}).
The temporal dimension of drawdown, its  duration, has not been previously studied in the context of risk measures. However, it has been considered in terms of its probabilistic properties. For example,  
in \cite{Zhang2012}, the probabilistic behavior of drawdown duration is analyzed and the joint Laplace transform of the last visit time of the maximum of a process preceding the drawdown, its, and the maximum of the process under general diffusion dynamics is derived. More recently, \cite{Landriault2015b} consider derive the duration of drawdowns for a large class of Levy processes and find that the law of duration of drawdowns qualitatively depends on the path type of the spectrally negative component of the underlying Levy process.

%%%%%%%%%%%%%%%%%%%%%%%%%%%%%%%%%%%%%%%%%%%%%%%%%%%%%%%%%%%%%%%%%%%%%%%%%%%%%%%%

\section{Temporal measures of risk}
\label{section:temporal_risk}

We review the notion of path-dependent measure of risk in the continuous-time setting and formalize the mathematical framework surrounding path-dependent temporal risk measures.

\subsection{Path-dependent risk measures}
In classical risk assessment, uncertain portfolio outcomes over a fixed time horizon are represented as random variables on a probability space. A risk measure maps each random variable to a real number summarizing the overall position in risky assets of a portfolio.
Such single-period measures of risk do not account for the return path an investment portfolio takes. Since investment funds do not hold static positions, measuring the risk of investments should ideally be defined over random processes rather than random variables.

We use the general setup of \citet{Cheridito2004} for the mathematical formalism of continuous-time path dependent risk. Continuous-time cumulative returns, or equivalently net asset value processes, are represented by essentially bounded \cadlag processes (in the given probability measure) that are adapted to the filtration of a filtered probability space. More formally, for a time horizon $T \in (0,\infty)$, let $(\Omega, \F, \{\F_t\}_{t\in[0,T]},\Prob)$ be a filtered probability space satisfying the usual assumptions, that is the probability space $(\Omega, \F,\Prob)$ is complete, $(\F_t)$ is right-continuous, and $\F_0$ contains all null-sets of $\F$. For $p \in [1,\infty]$,  $(\F_t)$-adapted \cadlag processes lie in the Banach space
$$ \Rspace^p = \left\{  X \colon [0,T] \times \Omega \to \R \mid X  \hspace{5pt} (\F_t)\textrm{-adapted \cadlag process }, \norm{X}{\Rspace^p} \right\}, $$
which comes equipped with the norm
$$ \norm{X}{\Rspace^p} := \norm{X^*}{p} $$
where $X^* = \sup_{t\in[0,T]} |X_t|$.

All equalities and inequalities between processes are understood throughout in the almost sure sense with respect to the probability measure $\Prob$. For example, for processes $X$ and $Y$, $X \leq Y$ means that for $\Prob$-almost all $\omega \in \Omega$, $X_t(\omega) \leq Y_t(\omega)$ for all $t$.

\begin{defn}[Continuous-time path-dependent risk measure]
A continuous-time \emph{path-dependent risk measure} is a real-valued function $\rho: \Rspace^\infty \to \R$.
\end{defn}

Analogous to single period risk, a path-dependent risk measure $\rho \colon \Rspace^\infty \to \R$ is \emph{monetary} if it satisfies the following axioms:
\begin{itemize} 
\item Translation invariance: For all $X \in \Rspace^\infty$ and all constant almost surely $C \in \Rspace^\infty$, \linebreak $\rho(X+C) = \rho(X)-C$.
\item Monotonicity: For all $X, Y \in \Rspace^\infty$ such that $X \leq Y$, $\rho(X) \leq \rho(Y)$.\end{itemize}
It is positive homogenous of degree one if for all $X \in \Rspace^\infty$ and $\lambda > 0$, $\rho(\lambda X) = \lambda \rho(X)$; and it is convex if for all $X, Y \in \Rspace^\infty$ and $\lambda \in [0,1]$, $\rho(\lambda X + (1-\lambda)Y) \leq \lambda \rho(X) + (1-\lambda) \rho(Y)$.  A monetary (path-dependent) risk measure that is both positive homogenous and convex is called \emph{coherent}.

\begin{rem}[Notational conventions]
For a stochastic process $X\in\Rspace^\infty$, we will write $\overline{X}$ for the stochastic process defined by $\overline{X}_t:=\sup_{u\in[0,t]} X_u$, which is the running maximum of $X$ up to time $t$. Moreover, for a random variable $Z$ and confidence level $\alpha\in[0,1]$, we will use 
$$ Q_\alpha(Z) := \inf \{ d\in\R : \Prob(Z>d) \leq 1-\alpha \} $$
to denote its $\alpha$-quantile, and
$$ TM_\alpha(Z) := \frac{1}{1-\alpha} \int_\alpha^1 Q_u(Z)\mathrm{d}u$$
to denote its $\alpha$-tail-mean.
\end{rem}

\subsection{Towards temporal path-dependent risk}

Most path-dependent risk measures are in essence a measure of the spatial dimension of the investment path, that is the magnitude of investment loss or gain. One may, however, be interested in the risk associated with the temporal dimension of the underlying stochastic process. For example in the fund management industry, historical values of the time it takes to regain a previous maximum (``peak-to-peak"), or the length of time between a previous maximum and a current low (``peak-to-trough") are frequently quoted alongside drawdown values. However, a generally accepted mathematical methodology for forming expectations about future potential such temporal risks does not seem to exist. 

Whereas the spatial dimension of a process summarizes a monetary quantity, such as investment gains, losses, or returns, the temporal dimension is measured in time units. It is therefore expected that the two quantities do not behave in the same way. To transition from the spatial dimension of a process to its temporal dimension, we define the so-called \emph{temporal transformation} which essentially rids a process of its spatial dimension by mapping it to a \emph{random time}. We formalize these notions next.

Fix a time horizon $T\in(0,\infty)$. Given a stochastic process $X\in\Rspace^\infty$, a \emph{random time} $\tau$ is a random variable on the same probability space $(\Omega, \F,\Prob)$ as $X$, taking values in the time interval $[0,T]$. We say that $X_\tau$ denotes the state of the process $X$ at random time $\tau$. Random times can be thought of as elements of the space $\mathcal{T} \subset L_0(\Omega, \F,\Prob)$ of real-valued random variables $\tau:\Omega\to[0,T]$. Note that the space $\mathcal{T}$ is a partially ordered set; it is reflexive, antisymmetric and transitive. The vector order $\leq$ on $\mathcal{T}$ is given by $\tau_1\leq\tau_2$ if and only if $\tau_1(\omega) \leq \tau_2(\omega)$ for almost all $\omega\in\Omega$.
In the probabilistic study of stochastic processes, typical examples of random time include the \emph{hitting time}, which is the first time at which a given process hits a given subset of the state space, and the \emph{stopping time}, which is  the time at which a given stochastic process exhibits a certain behavior of interest.

\subsection{Temporal transformations}
Given a stochastic process $X\in\Rspace^\infty$, we are now interested in the time it takes for certain events to occur. To extract this temporal trait of the process, we use a spatial invariant transformation to map $X$ to a random time in $\mathcal{T}$.

\begin{defn}[Temporal transformation]
A random variable transformation $\theta:\Rspace^\infty\to\mathcal{T}$ is called a \emph{temporal transformation} if it satisfies the following three axioms:
\begin{enumerate}
\item Normalization: For all constant deterministic $C\in\Rspace^\infty$, $\theta(C) = 0$.
\item Shift invariance: For all $X\in\Rspace^\infty$ and constant deterministic $C$, $\theta(X+C) = \theta(X)$.
\item Scaling invariance: For all $X\in\Rspace^\infty$ and $\lambda>0$, $\theta(\lambda X) = \theta(X)$.
\end{enumerate}
\end{defn}

The invariance under changes in the magnitude of a process, implied by axioms (2) and (3) above, essentially yields a random variable which is in some way independent of the spatial traits of the original process. A temporal transformation constructs such a random time summarizing a temporal property that we are interested in. 
The invariance axioms are substantially all that is required to obtain a temporal trait, as they discard any spatial features. Moreover, note that a temporal transformation need not satisfy the monotonicity axiom --- a given stochastic process being larger than another gives no indication of their ordering with respect to their respective temporal characteristics. 

The random time $\theta(X)$ associated with a stochastic process $X\in\Rspace^\infty$ can be interpreted as the time it takes $X$ to experience a certain risky property. Under this interpretation, one may think of longer periods of time to be worse than shorter ones, for example the time it takes to recover a previous high. With this in mind, a temporal transformation $\theta:\Rspace^\infty\to\mathcal{T}$ induces the following \emph{temporal preference relation}\footnote{A preference relation on a set $A$ is a binary relation $\succeq$ satisfying asymmetry and negative transitivity. It induces an indifference relation $\sim$ defined by $a\sim b$ if and only if $a\succeq b$ and $b\succeq a$ for $a,b\in A$.} $\succeq_\theta$ on $\Rspace^\infty$:

\begin{center}
For $X,Y\in\Rspace^\infty$,  $\quad X\succeq_\theta Y \quad $ if and only if $\quad \theta(X)\leq\theta(Y)$.
\end{center} 

This temporal preference order captures the idea that a process whose risky feature lasts longer in every state of the world is less preferred. Note in particular that the relation $\succeq_\theta$ satisfies $X+C\sim X$ and $\lambda X\sim X$.

To every temporal transformation $\theta$ and associated preference order $\succeq_\theta$ we associate its \emph{temporal acceptance family}  $\A_\theta$, a concept introduced by \cite{DrapeauKupper2013} in the context of single-period risk measures, generalizing the notion of risk acceptance set of \cite{Artzner1999}. 

\begin{defn}[Temporal acceptance family]
An increasing family $\Acc = (\Acc^\tau)_{\tau\in\mathcal{T}}$ of subsets $\Acc^\tau \subseteq \Rspace^\infty$ is a \emph{temporal acceptance family} if it satisfies the following properties:
\begin{enumerate}
\item For all constant deterministic $C\in\Rspace^\infty$ and $\tau\in\mathcal{T}$, $\Acc^\tau +C = \Acc^\tau$.
\item For all $\tau\in\mathcal{T}$ and $\lambda>0$, $\lambda\Acc^\tau = \Acc^\tau$.
%\item right-continuous?
\end{enumerate}
\end{defn}

Acceptance sets were traditionally introduced as an instrument for robust representation results of risk measures, and are often used to derive structural properties of risk measures and to model certain economic principles of risk. In our setup, a given temporal acceptance set abstractly represents all processes that share a specific temporal characteristic.

%For any \emph{risk level} $m\in\R$, the \emph{risk acceptance set of level} $m$ is the subset $\A_\rho^m\subseteq\Rspace^\infty$ of those paths that have a risk smaller than $m$. More formally,
%$$ \A^m_\rho = \{ X\in\Rspace^\infty : \rho(X) \leq m \} \ ,$$
%and $\A_\rho = (\A^m_\rho)_{m\in\R}$ is the \emph{risk acceptance family} corresponding to $\rho$.

Indeed, there is a bijective correspondence between temporal transformations and temporal acceptance sets, the verification of which is straightforward.

\begin{prop}
Let $\theta:\Rspace^\infty\to\mathcal{T}$ be a temporal transformation with associated temporal preference order $\succeq_\theta$. Then the family $\Acc_\theta$ of subsets $\Acc_\theta^\tau \subseteq\Rspace^\infty$ given by
$$ \Acc^\tau_\theta = \{ X\in\Rspace^\infty \ \colon \ \theta(X) \leq \tau \} $$
is a temporal acceptance family. Conversely, given a temporal acceptance family $\Acc^\tau$, the transformation $ \theta_\Acc : \Rspace^\infty \to \mathcal{T} $ defined by
$$ \theta_\Acc (X) := \inf \{ \tau\in\mathcal{T} \ : \ X\in\Acc^\tau \} $$
is a temporal transformation. Moreover, this correspondence is bijective, that is $\theta = \theta_{\Acc_\theta}$ and $\Acc = \Acc_{\theta_\Acc}$.
\end{prop}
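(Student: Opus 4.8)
The plan is to verify the proposition in four stages: (1) the family $\Acc_\theta$ is increasing and satisfies the two acceptance-family axioms; (2) the map $\theta_\Acc$ is a genuine temporal transformation; (3) $\theta = \theta_{\Acc_\theta}$; and (4) $\Acc = \Acc_{\theta_\Acc}$. Since the statement itself advertises the verification as ``straightforward,'' the proof is essentially a sequence of unwindings of definitions, and I expect no deep obstacle — but some care is needed around the infima in $\mathcal{T}$ and the fact that the order on $\mathcal{T}$ is the almost-sure pointwise (vector) order, not a total order.

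\textbf{Step 1.} First I would check that $\Acc_\theta = (\Acc_\theta^\tau)_{\tau\in\mathcal{T}}$ is increasing: if $\tau_1 \leq \tau_2$ and $X \in \Acc_\theta^{\tau_1}$, then $\theta(X) \leq \tau_1 \leq \tau_2$, so $X \in \Acc_\theta^{\tau_2}$. For axiom (1), shift invariance of $\theta$ gives $\theta(X+C) = \theta(X)$, so $X+C \in \Acc_\theta^\tau \iff X \in \Acc_\theta^\tau$, i.e. $\Acc_\theta^\tau + C = \Acc_\theta^\tau$. For axiom (2), scaling invariance gives $\theta(\lambda X) = \theta(X)$ for $\lambda > 0$, so $\lambda X \in \Acc_\theta^\tau \iff X \in \Acc_\theta^\tau$, giving $\lambda \Acc_\theta^\tau = \Acc_\theta^\tau$. (The normalization axiom of $\theta$ is not needed here; it will matter for the round-trip identities.)

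\textbf{Step 2.} Next, for a temporal acceptance family $\Acc$, I would show $\theta_\Acc(X) = \inf\{\tau : X \in \Acc^\tau\}$ is a temporal transformation. Normalization: by axiom (1) of $\Acc$, $0 \in \Acc^\tau$ for every $\tau$ such that $\Acc^\tau \neq \emptyset$ after shifting — more precisely $\Acc^0 + C = \Acc^0$ forces the constant processes to lie in $\Acc^0$ once one does (and one should note $0 \in \Acc^0$, which is the natural normalization built into the acceptance family; I would state this as the convention under which the bijection holds), hence $\theta_\Acc(C) = \inf\{\tau : C \in \Acc^\tau\} = 0$. Shift invariance: $X + C \in \Acc^\tau \iff X \in \Acc^\tau - C = \Acc^\tau$ by axiom (1), so the two infima coincide, $\theta_\Acc(X+C) = \theta_\Acc(X)$. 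Scaling invariance: $\lambda X \in \Acc^\tau \iff X \in \lambda^{-1}\Acc^\tau = \Acc^\tau$ by axiom (2), so $\theta_\Acc(\lambda X) = \theta_\Acc(X)$. One should also remark that the infimum is taken in the lattice of random variables $\mathcal{T}$ and lands in $\mathcal{T}$ because $\mathcal{T}$ is closed under the relevant infima (bounded below by $0$); since $\Acc$ is increasing, the set $\{\tau : X \in \Acc^\tau\}$ is ``upward closed,'' which is what makes $\theta_\Acc(X)$ itself a member of that set, i.e. $X \in \Acc^{\theta_\Acc(X)}$.

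\textbf{Step 3: the round trips.} For $\theta = \theta_{\Acc_\theta}$: by definition $\theta_{\Acc_\theta}(X) = \inf\{\tau : X \in \Acc_\theta^\tau\} = \inf\{\tau : \theta(X) \leq \tau\}$, and the infimum of the set of all $\tau \in \mathcal{T}$ dominating the fixed random variable $\theta(X)$ is $\theta(X)$ itself — here I use that $\theta(X) \in \mathcal{T}$ and that $\mathcal{T}$ with the a.s.\ pointwise order has $\theta(X)$ as the least element of its own up-set. For $\Acc = \Acc_{\theta_\Acc}$: unwinding, $\Acc_{\theta_\Acc}^\tau = \{X : \theta_\Acc(X) \leq \tau\} = \{X : \inf\{\sigma : X \in \Acc^\sigma\} \leq \tau\}$. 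The inclusion $\Acc^\tau \subseteq \Acc_{\theta_\Acc}^\tau$ is immediate since $X \in \Acc^\tau$ makes $\tau$ a member of the set over which we infimize. The reverse inclusion is the one place requiring the observation from Step 2: if $\theta_\Acc(X) \leq \tau$, then since $X \in \Acc^{\theta_\Acc(X)}$ (the up-set $\{\sigma : X \in \Acc^\sigma\}$ contains its infimum because $\Acc$ is increasing) and $\Acc$ is increasing, $X \in \Acc^\tau$. \textbf{This reverse inclusion — showing $X \in \Acc^{\theta_\Acc(X)}$, i.e. that the acceptance family actually attains its infimal level — is the only nontrivial point}, and it hinges on the ``increasing'' hypothesis plus a mild closure property of $\Acc$ under infima of chains of acceptable levels (or, if one prefers, one builds this attainment into the definition of ``increasing family''). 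Everything else is a mechanical substitution of the invariance axioms.
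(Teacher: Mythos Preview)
Your proposal is correct and in fact considerably more thorough than the paper, which gives no proof at all --- it simply states that ``the verification \ldots\ is straightforward'' and moves on. Your four-stage decomposition is exactly the natural one, and the routine parts (Steps 1 and 3, the forward round-trip) are handled cleanly.

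You also correctly flag the two places where the paper's definitions are slightly underspecified: (i) normalization of $\theta_\Acc$ on constants requires knowing that constants lie in $\Acc^0$, which the two acceptance-family axioms do not force on their own; and (ii) the reverse inclusion $\Acc_{\theta_\Acc}^\tau \subseteq \Acc^\tau$ requires that the infimum $\theta_\Acc(X)$ is actually attained, i.e.\ $X \in \Acc^{\theta_\Acc(X)}$, which needs a mild closure property beyond ``increasing.'' These are genuine lacunae in the paper's framework rather than in your argument, and your instinct to record them as conventions under which the bijection holds is the right one. The paper simply does not address them.
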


\subsection{Temporal measures of risk}

Now that we have transformed a stochastic process into a random variable representing some temporal characteristic, we can examine the distribution of this random time using a risk functional.
A temporal risk measure\footnote{The notion of temporal risk we introduce is not to be confused with that of \cite{Machina1984}, which, in the context of economic utility maximizing preferences,  captures the idea of delayed risk as opposed to immediately resolved risk when choosing amongst risky prospects.} is a path-dependent measure of risk which summarizes a temporal property of a stochastic process.
It is essentially a real-valued function describing a feature of the distribution of a random time associated with the stochastic process. Formally, temporal measures of risk can be decomposed into a temporal transformation and a risk measure:

\begin{defn}[Temporal risk measure]
For a given time horizon $T\in(0,\infty)$, a \emph{temporal risk measure} is a path-dependent risk measure $\rho_T: \Rspace^\infty \to \R$ defined by 
$$ \rho_T := \rho \circ \theta \ ,$$
where $\theta : \Rspace^\infty \to \mathcal{T}$ is a temporal transformation mapping a stochastic process to a random time, and $\rho:\mathcal{T}\to\R$ is a real-valued risk functional.
\label{defn:temporal}
\end{defn}

Note that there are no restrictions on either the risk functional $\rho$ or the temporal risk measure $\rho_T$. In particular, they need not satisfy the conventional properties of measures of risk, as we are not interested in the monetary dimension of a stochastic process. However, we can investigate the conditions under which we do obtain a coherent measure of temporal risk $\rho_T$. More specifically, since $\rho_T$ is a composite of a risk functional with a temporal transformation, the effect of composing a coherent risk functional $\rho$ with temporal transformations does not yield coherent temporal measures.

\begin{lem}
Path-dependent temporal measures of risk $\rho_T:\Rspace^\infty\to\R$ are not coherent measures of risk. In particular, given that $\rho_T=\rho\circ\theta$, coherence of the risk functional $\rho:\mathcal{T}\to\R$ does not imply coherence of $\rho_T$.
\end{lem}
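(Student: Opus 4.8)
The plan is to show that a temporal risk measure $\rho_T = \rho\circ\theta$ can never be \emph{monetary}, hence in particular never coherent, and that this failure is caused entirely by the shift invariance built into $\theta$, so it is insensitive to whatever properties one imposes on the risk functional $\rho$.

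First I would recall that coherence of a path-dependent risk measure presupposes that it is monetary, and in particular translation invariant: $\rho_T(X+C) = \rho_T(X) - C$ for every $X\in\Rspace^\infty$ and every constant almost surely $C\in\Rspace^\infty$. Next, using only the shift invariance axiom of the temporal transformation, namely $\theta(X+C) = \theta(X)$ for constant deterministic $C$, I would compute
$$ \rho_T(X+C) = \rho\bigl(\theta(X+C)\bigr) = \rho\bigl(\theta(X)\bigr) = \rho_T(X). $$
Comparing the two identities, translation invariance would force $\rho_T(X) = \rho_T(X) - C$, i.e. $C = 0$; choosing any nonzero constant $C\in\Rspace^\infty$ yields a contradiction. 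Hence $\rho_T$ is not translation invariant, so it is not monetary, so it is not coherent. Since this argument never used any feature of $\rho$, the second assertion follows at once: even when $\rho:\mathcal{T}\to\R$ is itself coherent, the composite $\rho_T = \rho\circ\theta$ fails to be coherent.

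For completeness I would also point out the parallel obstruction coming from scaling invariance: $\theta(\lambda X) = \theta(X)$ forces $\rho_T(\lambda X) = \rho_T(X)$ for all $\lambda>0$, which is compatible with positive homogeneity only if $\rho_T\equiv 0$, and the zero functional is in turn not translation invariant; so positive homogeneity alone already collapses any temporal risk measure to a degenerate object, reinforcing the conclusion.

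There is no genuine obstacle here — the statement is a direct consequence of the incompatibility between the shift invariance inherent in $\theta$ and the translation invariance inherent in the notion of a monetary risk measure. The only point requiring care is bookkeeping: correctly unwinding the definition $\rho_T = \rho\circ\theta$, and interpreting the constant $C$ consistently as an element of $\Rspace^\infty$ where it is added to a process and as the corresponding real number where it is subtracted from a risk value.
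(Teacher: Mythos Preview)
Your argument is correct. The paper states this lemma without proof, so there is no explicit argument to compare against; nevertheless, the reasoning the paper has in mind is exactly the incompatibility you exploit between the invariance axioms of $\theta$ and the coherence axioms. The remark immediately following the lemma stresses that temporal risk measures fail convexity and positive homogeneity (the axioms of practical interest), whereas you lead with the failure of translation invariance; both routes are immediate consequences of shift and scaling invariance of $\theta$, and either one already rules out coherence. Your supplementary paragraph on scaling invariance in fact recovers the positive-homogeneity failure the paper emphasises, so the two accounts are fully aligned.
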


\begin{rem}[Impact of non-coherence in practice]
The axioms of coherence were originally introduced as desirable properties for risk measures under the assumption that the risk of a position represents the amount of capital that should be added so that it becomes acceptable to the regulator. For example, from a regulatory viewpoint, translation invariance means that adding the value of any guaranteed position to an existing portfolio simply decreases the capital required by that guaranteed amount; monotonicity essentially states that positions that lead to higher losses should require more risk capital. From the temporal perspective, the monetary unit, and hence these two axioms, are irrelevant. On the other hand, convexity and positive homogeneity are two practically useful properties under any dimension --- convexity enables investors to allocate funds in such a way that minimizes overall risk, while positive homogeneity ensures that the overall risk of a portfolio can be linearly decomposed into additive subcomponents representing the individual factor contributions to risk. Temporal measures of risk are neither convex nor degree-one positive homogenous. This implies on the one hand that linear attribution to random time risk is not supported, and on the other hand that the favorable convex optimization theory is not applicable. Temporal risk hence seems to have limited practical application in the investment process.
Despite this, temporal risk encapsulates a potentially useful diagnostic measure of a dimension of risk that is traditionally not incorporated in the risk management process within the investment management industry. In addition to being a new risk diagnostic and assuming a realistic and efficient risk model, temporal risk measures enable the investor to form expectation about future potential temporal risk in practice.
\end{rem}

%%%%%%%%%%%%%%%%%%%%%%%%%%%%%%%%%%%%%%%%%%%%%%%%%%%%%%%%%%%%%%%%%%%%%%%%%%%%%%%%

\section{The Temporal Dimension of Drawdown}
\label{section:duration}

We analyze the properties of the temporal dimension of drawdown, one of the most widely quoted path-dependent measures of risk in practice. The two temporal dimensions we consider are \emph{drawdown duration}, which measures the time it takes a process to reach a previous running maximum, and \emph{liquidation stopping time}, which captures a subjectively set time threshold beyond which an investor liquidates if the drawdown exceeds this threshold. We start by recalling the properties of the magnitude of drawdown.

\subsection{The spatial dimension of drawdown}

Unlike conventional measures of risk, such as volatility, Value-at-Risk and Expected Shortfall, the notion of drawdown is inherently path-dependent. One of the most frequently quoted indicators of downside risk in the fund management industry, it measures the decline in value from the running maximum (high water mark) of a stochastic process representing the net asset value of an investment. 

\begin{defn}[Drawdown process]
For a horizon $T\in(0,\infty)$, the \emph{drawdown process} $D^{(X)}:=\{D^{(X)}_t\}_{t\in[0,T]}$ corresponding to a stochastic process $X\in\Rspace^\infty$ is defined by  
$$ D^{(X)}_t = \overline{X}_t - X_t \ ,$$
where 
$$ \overline{X}_t = \sup_{u\in[0,t]} X_u$$
is the running maximum of $X$ up to time $t$.
\end{defn}

The drawdown process associated with a given stochastic process has some practically intuitive properties. Clearly, a constant deterministic process does not experience any changes in value over time, implying that the corresponding drawdown process is zero. Moreover, any constant shift in a given process does not alter the magnitude of its drawdowns, and any constant multiplier of the stochastic process affects the drawdowns by the same multiplier. However, drawdown magnitudes are not preserved under monotonicity, which means that processes that can be ordered according to their magnitudes do not necessarily imply the same or opposite ordering on the drawdown magnitudes. Finally, a practically important property is convexity. Indeed, a convex combination of two processes results in a drawdown process that is smaller in magnitude than the average standalone drawdowns of the underlying processes. We formalize these properties next.

\begin{lem}[Properties of drawdown]
Given the stochastic process $X\in\Rspace^\infty$, let $D^{(X)}$ be the corresponding drawdown process for a fixed time horizon $T$. Then:
\begin{enumerate}
\item For all constant deterministic processes $C \in \Rspace^\infty$, $D^{(C)} = 0$.
\item For constant deterministic $C\in\Rspace^\infty$, $D^{(X+C)} = D^{(X)}$.
\item For $\lambda >0$, $D^{(\lambda X)} = \lambda D^{(X)}$.
\item For $Y\in\Rspace^\infty$ and $\lambda\in[0,1]$, $D^{(\lambda X+(1-\lambda)Y)}\leq \lambda D^{(X)} + (1-\lambda)D^{(Y)}$.
\end{enumerate}
\label{lem:drawdown_prop}
\end{lem}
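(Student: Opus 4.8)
The plan is to reduce all four statements to elementary properties of the running-maximum operator $X \mapsto \overline{X}$ on $\Rspace^\infty$, since $D^{(X)}_t = \overline{X}_t - X_t$ depends on $X$ only through $\overline{X}$ and $X$ itself. Working $\Prob$-almost surely and pointwise in $t$, everything follows from four facts about the supremum over the interval $[0,t]$: the supremum of a constant equals that constant; $\sup_{u \le t}(X_u + c) = (\sup_{u \le t} X_u) + c$; $\sup_{u \le t} \lambda X_u = \lambda \sup_{u \le t} X_u$ for $\lambda > 0$; and $\sup_{u \le t}(\lambda X_u + (1-\lambda) Y_u) \le \lambda \sup_{u \le t} X_u + (1-\lambda)\sup_{u \le t} Y_u$ for $\lambda \in [0,1]$. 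In operator form these read $\overline{C} = C$, $\overline{X + C} = \overline{X} + C$, $\overline{\lambda X} = \lambda \overline{X}$, and $\overline{\lambda X + (1-\lambda) Y} \le \lambda \overline{X} + (1-\lambda)\overline{Y}$.

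Given these, each claim is a one-line computation. For (1), $D^{(C)}_t = \overline{C}_t - C_t = C_t - C_t = 0$. For (2), $D^{(X+C)}_t = \overline{X+C}_t - (X_t + C_t) = \overline{X}_t + C_t - X_t - C_t = D^{(X)}_t$. For (3), with $\lambda > 0$, $D^{(\lambda X)}_t = \overline{\lambda X}_t - \lambda X_t = \lambda \overline{X}_t - \lambda X_t = \lambda D^{(X)}_t$. For (4), writing $Z = \lambda X + (1-\lambda) Y$, subadditivity and positive homogeneity of the running max give $\overline{Z}_t \le \lambda \overline{X}_t + (1-\lambda)\overline{Y}_t$, and subtracting $Z_t = \lambda X_t + (1-\lambda) Y_t$ from both sides yields $D^{(Z)}_t \le \lambda(\overline{X}_t - X_t) + (1-\lambda)(\overline{Y}_t - Y_t) = \lambda D^{(X)}_t + (1-\lambda) D^{(Y)}_t$, which is the asserted inequality of processes.

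There is no substantive obstacle here; the only points requiring a little care are the almost-sure qualifications (all identities and inequalities between processes are understood a.s. and pointwise in $t$, per the conventions fixed above) and the tacit bookkeeping that $D^{(X)}$ again lies in $\Rspace^\infty$ — the running supremum of an $(\F_t)$-adapted \cadlag process is $(\F_t)$-adapted, \cadlag, and nondecreasing, and $\norm{\overline{X}}{\Rspace^\infty} \le \norm{X}{\Rspace^\infty}$, so $\overline{X}$, and hence $D^{(X)}$, is essentially bounded. I would also note, without proof, that the running-maximum operator is monotone ($X \le Y$ implies $\overline{X} \le \overline{Y}$) whereas $\overline{X} - X$ is not, which is precisely why no monotonicity property appears in the list.
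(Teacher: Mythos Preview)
Your proof is correct and follows essentially the same approach as the paper's: the paper dismisses (1)--(3) as ``straightforward'' and proves (4) exactly as you do, by invoking $\overline{\lambda X+(1-\lambda)Y}\le \lambda\overline{X}+(1-\lambda)\overline{Y}$ from properties of the supremum and then subtracting $\lambda X_t+(1-\lambda)Y_t$. Your additional remarks on the a.s.\ conventions, membership of $D^{(X)}$ in $\Rspace^\infty$, and the failure of monotonicity are fine ancillary observations but go beyond what the paper records.
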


\begin{proof}
Properties (1) through (3) are straightforward. To derive (4), note that for $\lambda\in[0,1]$, we clearly have $\overline{\lambda X+(1-\lambda)Y}\leq \lambda \overline{X} + (1-\lambda)\overline{Y}$ by properties of the supremum, and therefore $D^{(\lambda X+(1-\lambda)Y)} = \overline{\lambda X+(1-\lambda)Y} -\lambda X -(1+\lambda)Y \leq \lambda \overline{X} + (1-\lambda)\overline{Y}-\lambda X -(1+\lambda)Y = \lambda D^{(X)} + (1-\lambda)D^{(Y)}$.
\end{proof}

\begin{rem}
Note that it is not generally the case that for $Y\in\Rspace^\infty$ for which $X\leq Y$, either $D^{(X)}\leq D^{(Y)}$ or $D^{(X)}\geq D^{(Y)}$. The only thing $X\leq Y$ implies is that $\overline{X}\leq \overline{Y}$. However, since at any point in time within the horizon the magnitude of a drop from peak is not specified, one cannot form a conclusion about the magnitude order of the corresponding drawdown processes. To  see this more formally, note that under either of the assumptions that $D^{(X)}>D^{(Y)}$ or $D^{(X)}<D^{(Y)}$, we always get $X>Y$.

\end{rem}

In practice, the use of the maximum drawdown as an indicator of risk is particularly popular in the universe of hedge funds and commodity trading advisors, where maximum drawdown adjusted performance measures, such as the Calmar ratio, the Sterling ratio and the Burke ratio, are frequently used. 

\begin{defn}[Maximum drawdown]
Within a fixed time horizon $T \in (0,\infty)$, the \emph{maximum drawdown} of the stochastic process $X\in\Rspace^\infty$ is the maximum drop from peak to trough of $X$ in $[0,T]$, and hence the largest amongst all drawdowns $D^{(X)}_t$:
$$ \mu (X) : = \sup_{t\in[0,T]}\{D^{(X)}_t\} .$$
Alternatively, maximum drawdown can be defined as the random variable obtained through the following transformation of the underlying stochastic process $X$:
$$\mu(X) := \sup_{t\in[0,T]} \sup_{s\in[t,T]} \left\{ X_s - X_t\right\} \ .$$
\end{defn}

The tail of the maximum drawdown distribution, from which the likelihood of a drawdown of a given magnitude can be distilled, could be of particular interest in practice. The drawdown risk measure defined in \citet{GoldbergMahmoud2014a} is the tail mean of the maximum drawdown distribution.
At confidence level $\alpha \in [0,1]$, the \emph{Conditional Expected Drawdown} $\CED_\alpha: \Rspace^\infty \to \R$ is defined to be the path-dependent risk measure mapping the process $X$ to the expected maximum drawdown $\mu(X)$ given that the maximum drawdown threshold at $\alpha$ is breached. More formally, 
$$ \CED_\alpha (X) : = \TM_\alpha(\mu(X)) =  \frac{1}{1-\alpha} \int_\alpha^1 Q_u\left( \mu(X) \right) d u ,$$
where $Q_\alpha$ is a quantile of the maximum drawdown distribution:
$$ Q_\alpha \left( \mu(X) \right)= \inf \left\{ d\in\R :  \Prob \left( \mu(X) > d \right) \leq 1-\alpha \right\} $$
If the distribution of $\mu(X)$ is continuous, then $\CED_\alpha$ is equivalent to the tail conditional expectation:
$$ \CED_\alpha (X) = \E \left(\mu(X) \mid \mu(X) > \DT_\alpha\left( \mu(X) \right)\right).$$

CED has sound mathematical properties making it amenable to the investment process. Indeed, it is a degree-one positive homogenous risk measure, so that it can be attributed to factors, and convex, so that it can be used in quantitative optimization. 

\begin{prop}[\cite{GoldbergMahmoud2014a}]
For a given confidence level $\alpha\in[0,1]$, Conditional Expected Drawdown $\CED_\alpha:\Rspace^\infty\to\R$ is a degree-one positive homogenous and convex path-dependent measure of risk, that is $\CED_\alpha(\lambda X) = \lambda \CED_\alpha(X)$ for $\lambda>0$ and $\CED_\alpha(\lambda X +(1-\lambda)Y) \leq \lambda \CED_\alpha(X) +(1-\lambda)\CED_\alpha (Y)$ for $\lambda\in[0,1]$.
\end{prop}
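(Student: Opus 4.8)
The plan is to prove the two assertions --- positive homogeneity of degree one and convexity --- separately, reducing each to the corresponding property of the drawdown process established in Lemma \ref{lem:drawdown_prop}, together with elementary properties of quantiles and the tail-mean functional $\TM_\alpha$. The key structural observation is that $\CED_\alpha = \TM_\alpha \circ \mu$, where $\mu(X) = \sup_{t\in[0,T]} D^{(X)}_t$ is the maximum drawdown random variable, so it suffices to understand how $\mu$ and then $\TM_\alpha$ behave under scaling and convex combination.

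For positive homogeneity, first I would observe that for $\lambda > 0$, property (3) of Lemma \ref{lem:drawdown_prop} gives $D^{(\lambda X)} = \lambda D^{(X)}$, hence $\mu(\lambda X) = \sup_t \lambda D^{(X)}_t = \lambda \mu(X)$ pointwise (almost surely). Then I would use that quantiles are positively homogeneous: $Q_u(\lambda Z) = \lambda Q_u(Z)$ for $\lambda > 0$ (immediate from the definition $Q_u(Z) = \inf\{d : \Prob(Z > d) \le 1-u\}$ by the substitution $d \mapsto \lambda d$). Integrating over $u \in [\alpha,1]$ and dividing by $1-\alpha$ gives $\TM_\alpha(\lambda Z) = \lambda \TM_\alpha(Z)$, and combining yields $\CED_\alpha(\lambda X) = \TM_\alpha(\mu(\lambda X)) = \TM_\alpha(\lambda \mu(X)) = \lambda \TM_\alpha(\mu(X)) = \lambda \CED_\alpha(X)$.

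For convexity, fix $\lambda \in [0,1]$ and $X, Y \in \Rspace^\infty$. Property (4) of Lemma \ref{lem:drawdown_prop} gives $D^{(\lambda X + (1-\lambda)Y)} \le \lambda D^{(X)} + (1-\lambda) D^{(Y)}$ pathwise, so taking the supremum over $t$ on both sides and using subadditivity of the supremum, $\mu(\lambda X + (1-\lambda)Y) \le \sup_t(\lambda D^{(X)}_t + (1-\lambda)D^{(Y)}_t) \le \lambda \sup_t D^{(X)}_t + (1-\lambda)\sup_t D^{(Y)}_t = \lambda \mu(X) + (1-\lambda)\mu(Y)$ almost surely. Next I need two facts about $\TM_\alpha$: it is monotone ($Z_1 \le Z_2$ a.s. implies $\TM_\alpha(Z_1) \le \TM_\alpha(Z_2)$, since quantiles are monotone) and subadditive ($\TM_\alpha(Z_1 + Z_2) \le \TM_\alpha(Z_1) + \TM_\alpha(Z_2)$, the well-known superadditivity/subadditivity of tail-mean / expected shortfall). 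Chaining these: $\CED_\alpha(\lambda X + (1-\lambda)Y) = \TM_\alpha(\mu(\lambda X + (1-\lambda)Y)) \le \TM_\alpha(\lambda\mu(X) + (1-\lambda)\mu(Y)) \le \TM_\alpha(\lambda\mu(X)) + \TM_\alpha((1-\lambda)\mu(Y)) = \lambda\TM_\alpha(\mu(X)) + (1-\lambda)\TM_\alpha(\mu(Y)) = \lambda\CED_\alpha(X) + (1-\lambda)\CED_\alpha(Y)$, where the last equality uses the positive homogeneity of $\TM_\alpha$ just established (applied to $\lambda, 1-\lambda \ge 0$; the boundary cases $\lambda \in \{0,1\}$ are trivial).

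The main obstacle is the subadditivity of the tail-mean functional $\TM_\alpha$ on the relevant class of random variables; this is standard (it is the coherence of Expected Shortfall), but I would either cite it (e.g. via \citet{Rockafellar2002} or a standard reference on coherent risk measures) or include a short proof via the representation $\TM_\alpha(Z) = \sup\{\E[ZW] : W \ge 0,\ \E[W] = 1,\ W \le 1/(1-\alpha)\}$, from which subadditivity follows immediately by the subadditivity of supremum. Everything else is routine manipulation of the definitions. I should also note the minor technical point that $\mu(X) \in L^\infty \subseteq L^1$ because $X \in \Rspace^\infty$ and $|\mu(X)| \le 2 X^*$, so all the quantile integrals are finite and the functional is well-defined.
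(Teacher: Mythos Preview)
Your proposal is correct. Note, however, that the paper does not actually supply its own proof of this proposition: it is stated as a citation of \cite{GoldbergMahmoud2014a} and left unproved in the present text. Your argument --- reducing to Lemma~\ref{lem:drawdown_prop} for the behaviour of $D^{(\cdot)}$ and $\mu$, and then invoking monotonicity, positive homogeneity and subadditivity of the tail-mean functional $\TM_\alpha$ (i.e.\ coherence of Expected Shortfall) --- is exactly the natural route and is essentially how the cited reference proceeds. The only step requiring outside input is the subadditivity of $\TM_\alpha$, which you rightly flag; a citation to \cite{Rockafellar2002} or the dual representation you mention is sufficient.
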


\subsection{Drawdown duration}

For a fixed time horizon $T$, our main object of interest is now the temporal dimension of drawdown. One such dimension is the duration of the  drawdown process $D^{(X)}$ corresponding to the price process $X$, which measures the length of excursions of $X$ below a running maximum. Commonly referred to as the \emph{Time To Recover (TTR)} in the fund management industry, the duration captures the time it takes to reach a previous running maximum of a process for the first time. 

As before, fix a time horizon $T\in(0,\infty)$ and let $D^{(X)}=\{D^{(X)}_t\}_{t\in[0,T]}$ be the drawdown process corresponding to the stochastic process $X\in\Rspace^\infty$, and $\overline{X}$ be the running maximum of $X$. 

\begin{defn}[Peak time process]
The \emph{peak time process} $G^{(X)} = \{ G^{(X)}_t \}_{t\in[0,T]}$ is defined by
$$ G^{(X)}_t = \sup \left\{ s \in [0,t] : X_s = \overline{X}_s \right\}. $$
In words, $G^{(X)}_t$ denotes the last time $X$ was at its peak, that is the last time it coincided with its maximum $\overline{X}$ before $t$. 
\end{defn}

Note that $G^{(X)}$ is necessarily non-decreasing, consists of only linear subprocesses (more specifically, as a function of $t$, linear intervals $\{G^{(X)}_t\}_{t\in[r,s]}$ for $r<s$ are either the identity or a constant), and has jump discontinuities (under the realistic assumption that the underlying process $X$ is not monotonic). Moreover, the process $G^{(X)}$ is invariant under constant shifts or scalar multiplication of the underlying process. Similar to the drawdown process, one can show that the peak time process associated to a stochastic process $X$ does not preserve monotonicity. In other words, the peak time process corresponding to a process of larger value need not be larger. However, unlike the drawdown process which preserves convexity, peak time is not preserved under either convexity or concavity. This means that a convex combination of two processes does not result in a peak time process that is consistently smaller or greater in magnitude than the average standalone peak times of the underlying processes. 
We formally state these properties.

\begin{lem}[Properties of peak time]
Given stochastic processes $X\in\Rspace^\infty$, let $G^{(X)}$ be the corresponding peak time process for a fixed time horizon $T$. Then:
\begin{enumerate}
\item For all constant deterministic processes $C \in \Rspace^\infty$, $G_t^{(C)} = t$ for all $t\in[0,T]$.
\item For constant deterministic $C\in\Rspace^\infty$, $G_t^{(X+C)} = G_t^{(X)}$ for all $t\in[0,T]$.
\item For $\lambda >0$, $G_t^{(\lambda X)} = G_t^{(X)}$ for all $t\in[0,T]$.
\end{enumerate}
\label{lem:peaktime_prop}
\end{lem}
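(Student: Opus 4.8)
The plan is to verify each of the three assertions directly from the definition $G^{(X)}_t = \sup\{s\in[0,t] : X_s = \overline X_s\}$, exploiting the fact that all three hypotheses act purely on the \emph{spatial} scale of the path and leave the set of "peak times" $\{s : X_s = \overline X_s\}$ unchanged. The main observation underpinning all three parts is the following: if $\phi:\R\to\R$ is any strictly increasing function, and we set $Y_t := \phi(X_t)$, then $\overline Y_t = \sup_{u\in[0,t]}\phi(X_u) = \phi\bigl(\sup_{u\in[0,t]} X_u\bigr) = \phi(\overline X_t)$, since a strictly increasing function commutes with suprema over a set. Consequently $Y_s = \overline Y_s \iff \phi(X_s) = \phi(\overline X_s) \iff X_s = \overline X_s$ (again by strict monotonicity, which is injective), so the two processes have literally the same peak-time set and hence $G^{(Y)}_t = G^{(X)}_t$ for every $t$.

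For part (2), apply this with $\phi(x) = x + c$ where $C \equiv c$ is the constant deterministic process; $\phi$ is strictly increasing, so $G^{(X+C)}_t = G^{(X)}_t$ for all $t\in[0,T]$. For part (3), apply it with $\phi(x) = \lambda x$ for $\lambda > 0$; again $\phi$ is strictly increasing, giving $G^{(\lambda X)}_t = G^{(X)}_t$ for all $t$. For part (1), if $C \equiv c$ is constant deterministic, then $\overline C_t = c = C_t$ for every $t$, so the peak-time set is the whole interval $[0,t]$, and $G^{(C)}_t = \sup[0,t] = t$. One should note the almost-sure caveat inherited from the ambient convention: for part (1) "constant deterministic" makes equalities literal, while for (2) and (3) the identities hold $\Prob$-almost surely because they hold pathwise on the almost-sure event where $X$ is \cadlag.

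I do not expect any genuine obstacle here; the only point requiring a moment's care is the interchange $\sup_{u}\phi(X_u) = \phi(\sup_u X_u)$, which is valid because $\phi$ is continuous and strictly increasing (so it is order-preserving on $\R$ and maps the supremum of a bounded set to the supremum of the image). Since $X\in\Rspace^\infty$ is essentially bounded with \cadlag paths, the relevant suprema are attained as limits along the path and the argument goes through without measurability or integrability subtleties. The whole proof is therefore a short reduction to the invariance of the peak-time set under strictly monotone spatial rescaling, plus the trivial computation for a constant process.
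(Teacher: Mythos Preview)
Your argument is correct. The paper does not actually supply a proof of this lemma; it simply states the three properties and moves on, treating them as self-evident from the definition $G^{(X)}_t = \sup\{s\in[0,t] : X_s = \overline X_s\}$. Your unifying observation---that any strictly increasing map $\phi$ applied pointwise leaves the set $\{s : X_s = \overline X_s\}$ invariant, since $\sup_u \phi(X_u) = \phi(\sup_u X_u)$ and $\phi$ is injective---is a clean way to handle parts (2) and (3) simultaneously, and the computation for part (1) is immediate. There is nothing to compare against in the paper beyond the implicit ``straightforward from the definition'', so your proposal is a valid and slightly more structured fleshing-out of what the paper leaves unsaid.
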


\begin{rem}
Note also that peak time is not necessarily preserved under monotonicity, that is $X\leq Y$ does not necessarily imply either $G_t^{(X)}\leq G_t^{(Y)}$ or $G_t^{(X)}\geq G_t^{(Y)}$ for all $t\in[0,T]$. Intuitively, the last time a process coincides with its running maximum is independent of the magnitude of the process. 
Moreover, peak time does not necessarily exhibit either quasiconvex- or quasiconcave-like behavior, that is for $\lambda\in[0,1]$, $G_t^{(\lambda X + (1-\lambda)Y)}$ is not necessarily either greater than $\min \{ G_t^{(X)}, G_t^{(Y)} \}$ or smaller than $\max \{ G_t^{(X)}, G_t^{(Y)} \}$ for all $t\in[0,T]$.
We construct a simple example showing that for $\lambda\in[0,1]$, $G_t^{(\lambda X + (1-\lambda)Y)}$ is not necessarily greater than $\min \{ G_t^{(X)}, G_t^{(Y)} \}$ for all $t\in[0,T]$. Fix a time $t\in[0,T]$ and, without loss of generality, let $G_t^{(X)} = t_0$ and $G_t^{(Y)}=t_1$ with $t_0<t_1\leq t$. Let $G_t^{(\lambda X + (1-\lambda)Y)} = t^*$ and we examine what happens if $t^*< t_0 = \min \{ G_t^{(X)}, G_t^{(Y)} \}$. In this case, we have by definition that $\lambda X + (1-\lambda)Y < \overline{\lambda X + (1-\lambda)Y}$. In particular, at $t_0$, we have
$$ \lambda X_{t_0} + (1-\lambda) Y_{t_0} < \overline{(\lambda X + (1-\lambda)Y)}_{t_0} = \lambda \overline{X}_{t_0} + (1-\lambda) \overline{Y}_{t_0} = \lambda X_{t_0} + (1-\lambda) \overline{Y}_{t_0} \ ,$$
implying that $Y_{t_0} < \overline{Y}_{t_0}$. Note that we do not have information about where $Y$ is relative to its running maximum $\overline{Y}$ before time $t_1$. This means that if $Y_{t_0} < \overline{Y}_{t_0}$, then $t^*<  \min \{ G_t^{(X)}, G_t^{(Y)} \}$, and on the other hand, if $Y_{t_0} = \overline{Y}_{t_0}$, then $t^*\geq \min \{ G_t^{(X)}, G_t^{(Y)} \}$.
One can construct a similar argument to show that $G_t^{(\lambda X + (1-\lambda)Y)}$ is not necessarily smaller than $\max \{ G_t^{(X)}, G_t^{(Y)} \}$.
\end{rem}

Probabilistically, the trajectory of the process $X$ between its peak time $G^{(X)}_t$ and its recovery time $L_t = \sup \{ s \in [t,T] : X_s = \overline{X}_s \}$ is the excursion of $X$ at its running maximum, which straddles time $t$. If $X<\overline{X}$ during this excursion, we say that $X$ is in drawdown or \emph{underwater}. 
We are now interested in $t-G^{(X)}_t$, which is the duration of this excursion.

\begin{defn}[Duration process]
Given a process $X \in \Rspace^\infty$ and time horizon $T$, the \emph{duration process}
$\delta^{(X)} = \{ \delta^{(X)}_t \}_{t\in[0,T]}$
 associated with $X$ is defined by 
$$\delta^{(X)}_t = t-G^{(X)}_t.$$
\end{defn}

The following properties are immediate consequences of Lemma \ref{lem:peaktime_prop}

\begin{lem}[Properties of duration]
Given a process $X \in \Rspace^\infty$ and time horizon $T$, the duration process $\delta^{(X)} = \{ \delta^{(X)}_t \}_{t\in[0,T]}$ satisfies the following properties on $[0,T]$:
\begin{enumerate}
\item For all constant deterministic processes $C \in \Rspace^\infty$, $\delta^{(C)} = 0$.
\item For all $X\in\Rspace^\infty$ and all constant deterministic processes $C \in \Rspace^\infty$, $\delta^{(X+C)} = \delta^{(X)}$.
\item For all $X\in\Rspace^\infty$ and $\lambda>0$, $\delta^{(\lambda X)} = \delta^{(X)}$.
\end{enumerate}
\label{lem:dur_prop}
\end{lem}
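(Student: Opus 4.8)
The plan is to derive each of the three properties directly from the corresponding property of the peak time process in Lemma~\ref{lem:peaktime_prop}, using only the definition $\delta^{(X)}_t = t - G^{(X)}_t$. Since the duration process is obtained from $G^{(X)}$ by the affine map $g \mapsto t - g$ at each fixed $t \in [0,T]$, every statement transfers mechanically; there is no real obstacle here, and the proof is a short verification. I would simply chase the definitions in the order the properties are listed.

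For property (1), let $C \in \Rspace^\infty$ be constant deterministic. By Lemma~\ref{lem:peaktime_prop}(1), $G_t^{(C)} = t$ for all $t \in [0,T]$, hence $\delta^{(C)}_t = t - G_t^{(C)} = t - t = 0$ for all $t$, i.e.\ $\delta^{(C)} = 0$. For property (2), fix $X \in \Rspace^\infty$ and constant deterministic $C$. By Lemma~\ref{lem:peaktime_prop}(2), $G_t^{(X+C)} = G_t^{(X)}$ for all $t$, so $\delta^{(X+C)}_t = t - G_t^{(X+C)} = t - G_t^{(X)} = \delta^{(X)}_t$ for all $t$, giving $\delta^{(X+C)} = \delta^{(X)}$. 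For property (3), fix $X \in \Rspace^\infty$ and $\lambda > 0$. By Lemma~\ref{lem:peaktime_prop}(3), $G_t^{(\lambda X)} = G_t^{(X)}$ for all $t$, whence $\delta^{(\lambda X)}_t = t - G_t^{(\lambda X)} = t - G_t^{(X)} = \delta^{(X)}_t$ for all $t$, so $\delta^{(\lambda X)} = \delta^{(X)}$.

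All equalities above are understood in the almost sure sense, consistent with the convention in the excerpt, and they hold pathwise precisely because the corresponding identities for $G^{(X)}$ in Lemma~\ref{lem:peaktime_prop} do. The only thing worth a remark is that, as noted for peak time, duration inherits neither monotonicity nor any quasiconvexity/quasiconcavity property, since subtracting from $t$ does not repair the failure of those properties for $G^{(X)}$; but since the statement of Lemma~\ref{lem:dur_prop} only claims (1)--(3), nothing further is needed. The main ``obstacle,'' such as it is, is simply recognizing that this lemma is an immediate corollary and resisting the urge to re-derive the peak time properties from scratch.
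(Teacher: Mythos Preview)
Your proof is correct and matches the paper's own treatment exactly: the paper states that these properties are ``immediate consequences of Lemma~\ref{lem:peaktime_prop}'' and does not even spell out the three one-line verifications you gave. There is nothing to add or correct.
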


\begin{rem}
Similar to peak time, drawdown duration is not necessarily preserved under monotonicity, that is $X\leq Y$ does not necessarily imply either $\delta^{(X)}\leq \delta^{(Y)}$ or $\delta^{(X)}\geq \delta^{(Y)}$, and that drawdown duration does not necessarily exhibit either convex- or concave-like behavior, that is for $\lambda\in[0,1]$, $\delta^{(\lambda X + (1-\lambda)Y)}$ is not necessarily either greater or smaller than $\lambda\delta^{(X)}+(1-\lambda)\delta^{(Y)}$.
\end{rem}

Of particular interest now is the maximum time spent underwater within a fixed time horizon $T$, independent of the magnitude of the actual drawdown experienced by the process $X$ during this time interval. 

\begin{defn}[Maximum duration]
Given a process $X \in \Rspace^\infty$ and time horizon $T$, let $\delta^{(X)}$ be the duration process corresponding to $X$. The \emph{maximum duration} of the stochastic process $X$ is the real valued random variable defined by
$$ \delta^{(X)}_{\max} = \sup_{t\in[0,T]} \{ \delta^{(X)}_t \} . $$
\end{defn}

Maximum duration is clearly a random time in $\mathcal{T}$ defined on the same probability space as $X$ and taking values in the interval $[0,T]$. 
%This leads us to define the temporal transformation $$ \delta_{\max}^{(-)} : \Rspace^\infty \to \mathcal{T} $$ mapping a process $X\in\Rspace^\infty$ to its maximum duration $\delta^{(X)}_{\max}$. The temporal spatial invariant properties of this transformation are immediate consequences of the properties of the duration process.

\begin{rem}[Duration of Maximum Drawdown]
We point out that our notion of maximum duration $\delta^{(X)}_{\max}$ differs from the duration or length of the deepest excursion below the maximum, $\mu(X)$, within the given path. Suppose that the maximum drawdown of the process $X$ occurred between times $\tau_p\in[0,T)$ (the ``peak") and $\tau_r\in[\tau_p,T]$ (the ``recovery"), where we assume for the sake of illustration that $\tau_r$ is defined, that is recovery indeed occurs within the given horizon. Note that there must be a point in time $\tau_b \in (\tau_p,\tau_r)$ where $X$ was at its minimum (the ``bottom") during the interval $(\tau_p,\tau_r)$. 
The time at which the minimum of $X$ within the interval in which the maximum drawdown occurred is given by
$$ \tau_b = \inf \{ t\in[0,T] : \mu(X) = \sup_{t\in[0,T]} D_t \}. $$
Then $\tau_p$ is the last time $X$ was at its maximum before $\tau_b$:
$$ \tau_p = \sup \{ t\in[0,\tau_b] : X_t = \overline{X}_t \}, $$
and $\tau_r$ is the first time $X$ coincides again with its rolling maximum:
$$ \tau_r = \inf \{t\in[\tau_b,T] : X_t = \overline{X}_t\} .$$
Given a process $X \in \Rspace^\infty$, the duration of the maximum drawdown of $X$ is then the random variable defined by $\delta^{(X)}_\mu = \tau_r-\tau_p$.
It is a straightforward exercise to show that $\delta^{(X)}_\mu$ satisfies the same properties that maximum duration  $\delta^{(X)}_{\max}$ satisfies.

Empirically, the two notions are closely related (see Figure \ref{maxdd_duration}), with the duration of maximum drawdown being noisier (and in fact following the actual historical maximum drawdown very closely). In studying the duration of maximum drawdown one is hence essentially analyzing the maximum drawdown itself, and this reduces to the spatial dimension of the underlying process. By considering the maximum duration, however, one is focused entirely on the temporal dimension, even though the two are correlated, as we shall see later.
\end{rem}

\begin{figure}
\centering
  \includegraphics[width=0.7\linewidth]{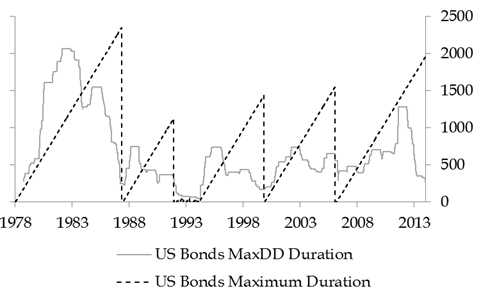}
\caption{Daily time series of historical maximum duration and the duration of maximum drawdown for US Bonds over the period 1978--2013.}
\label{maxdd_duration}
\end{figure}

\begin{rem}[Stopping time]

A second temporal dimension of drawdown that may be of interest in practice is the stopping time. In probability theory, a stopping time (also known as Markov time) is a random time whose value is interpreted as the time at which a given stochastic process exhibits a certain behavior of interest. A stopping time is generally defined by a stopping rule, a mechanism for deciding whether to continue or stop a process on the basis of the present position and past events, and which will almost always lead to a decision to stop at some finite time. It is thus completely determined by (at most) the total information known up to a certain time. 
In our context of drawdowns of investments, it may be of interest to calculate the probability that a process stays underwater for a period longer than a certain subjectively set acceptance threshold. No matter what the magnitude of the loss is, if the time to recover exceeds this threshold, one may be forced to liquidate. Given a process $X \in \Rspace^\infty$ over a time horizon $T\in(0,\infty)$ with corresponding duration process $\delta^{(X)}$, denote by $l\in(0,T]$ a subjectively set liquidation threshold. The \emph{liquidation stopping time} (LST) $\tau_L$ is defined by
$$ \tau^{(X)}_L = \inf \{ t \in [0,T] : \delta^{(X)}_t \geq l \}. $$

This stopping time $\tau_L$ hence denotes the first time the drawdown duration $\delta^{(X)}$ exceeds the pre-specified liquidation threshold $l$; that is the first time the process $X$ has stayed underwater for a consecutive period of length greater than $l$. It essentially specifies a rule that tells us when to exit a trade. Note that the decision to ``stop" at time $\tau_L$ can only depend (at most) on the information known up to that time and not on any future information. 
\footnote{From a probabilistic viewpoint, the liquidation stopping time $\tau_L$ can be identified with Parisian stopping times for a zero barrier, which was studied in \citet{Chesney1997} and \citet{Loeffen2013}. Stopping times were also introduced in actuarial risk theory in \citet{DassiosWu2009}, where the process $X$ models the surplus of an insurance company with initial capital, and the stopping time of an excursion is referred to as Parisian ruin time.}
\end{rem}

%%%%%%%%%%%%%%%%%%%%%%%%%%%%%%%%%%%%%%%%%%%%%%%%%%%%%%%%%%%%%%%%%%%%%%%%%%%%%%%%%%%%%%%%%%%%%%%%%%%%%%

\section{Duration risk}
\label{section:duration_risk}

\begin{figure}
\centering
  \includegraphics[width=\linewidth]{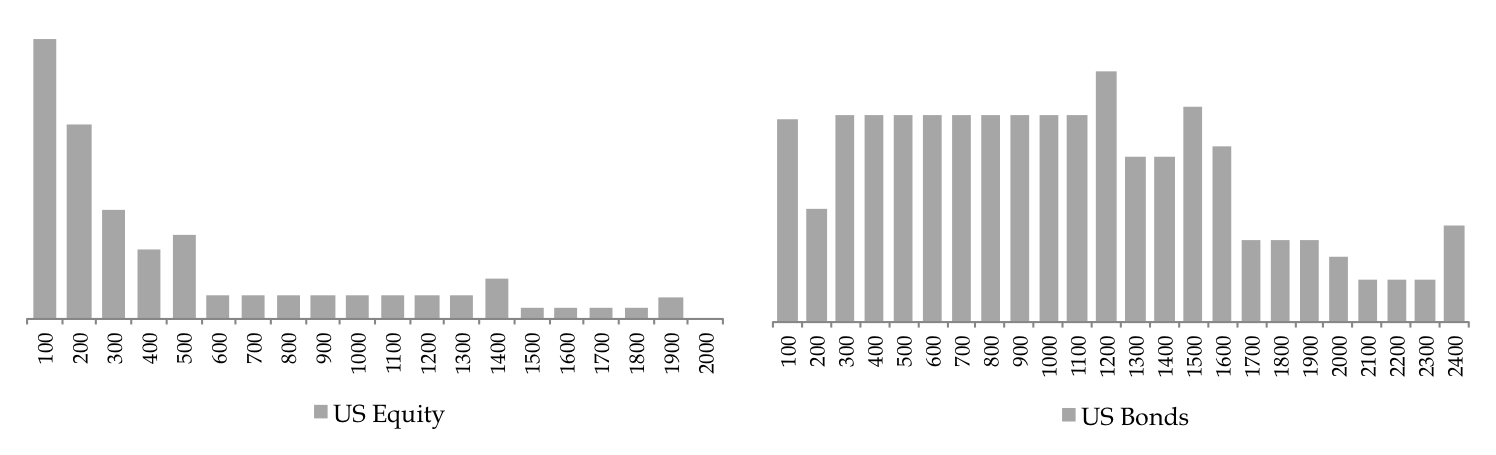}
\caption{Empirical distribution of maximum duration for daily US Equity and US Bond Indices over the period between January 1973 and December 2013, inclusive. The distribution is generated empirically as follows. From the historical time series of returns, we generate return paths of a fixed 6-month length ($n=180$ trading days) using a one-day rolling window. The maximum duration within each path is then calculated. This means that consecutive paths overlap. The advantage is that for a daily return time series of length $T$, we obtain a maximum duration series of length $T-n$, which for large $T$ and small $n$ is fairly large, too.}
\label{duration_distr}
\end{figure}

We analyze the distribution of drawdown duration, both theoretically and empirically, using temporal risk measures. Even though, in a given horizon, only a single maximum duration is realized along any given path, it is beneficial to consider the distribution from which the maximum duration is taken. By looking at this distribution, one can form reasonable expectations about the expected length of drawdowns for a given portfolio over a given investment horizon. 

Figure \ref{duration_distr} displays the empirical maximum duration distribution of US Equity and US Government Bonds over the 40-year period 1973--2013 using daily data.\footnote{The data were obtained from the Global Financial Data database. We took the
daily time series for the S\&P 500 Index and the USA 10-year Government Bond Total Return Index covering the 40-year period between January 1973 and December 2013, inclusive.} 
While maximum drawdown distributions are generally positively skewed independent of the underlying risk characteristics, which implies that very large drawdowns occur less frequently than smaller ones (see \cite{Burghardt2003} and \cite{GoldbergMahmoud2014a}), this is not necessarily the case for the distribution of maximum duration. Positive skewness is pronounced for US Equity with a value of 1.3, while it is less noticeable for US Bonds at a value of 0.4.

We can now describe the risk characteristics of maximum duration using path-dependent temporal risk measures.

\begin{defn}[Measures of duration risk]
We define the following path-dependent temporal measures of risk $\rho: \Rspace^\infty \to \R$ describing the distribution of the maximum duration $\delta^{(X)}_{\max}$ associated with a stochastic process $X\in\Rspace^\infty$:
\begin{enumerate}
\item \emph{Duration Deviation:} $\sigma_\delta : \Rspace^\infty \to \R$ is defined by
$$\sigma_\delta(X) = \sigma\left(\delta^{(X)}_{\max}\right) \ , $$
where $\sigma$ is the standard deviation.
\item \emph{Duration Quantile:} For confidence level $\alpha\in[0,1]$, the duration quantile is defined by 
$$ Q_\alpha \left(\delta^{(X)}_{\max} \right) = \inf \left\{d\in\R : \Prob(\delta^{(X)}_{\max}>d) \leq 1-\alpha \right\} . $$
\item \emph{Conditional Expected Duration}: For confidence level $\alpha\in[0,1]$, the conditional expected duration is defined by
$$ \TM_\alpha\left( \delta^{(X)}_{\max} \right) = \frac{1}{1-\alpha} \int_\alpha^1 Q_\alpha \left(\delta^{(X)}_{\max} \right)\mathrm{d}u $$
For continuous $\delta(X)$, the above amounts to
$$ \TM_\alpha\left( \delta^{(X)}_{\max} \right) = \E\left[ \delta^{(X)}_{\max} \mid \delta^{(X)}_{\max} > Q_\alpha \left(\delta^{(X)}_{\max} \right)\right]. $$
\end{enumerate}
\end{defn}

Note that each of these path-dependent risk measures is indeed temporal in the sense of Definition \ref{defn:temporal}. In each case, the path-dependent risk measure $\rho_T:\Rspace^\infty\to\R$ is the composite of a risk functional (deviation, quantile, tail mean) applied to the maximum duration with the temporal transformation mapping a stochastic process $X\in\Rspace^\infty$ to its corresponding maximum duration $\delta^{(X)}_{\max}$. We therefore immediately know that 
none of these path-dependent duration risk measures satisfies any of the coherence axioms of risk measures.

Empirically, duration risk is  consistent with the stylized fact that equities are riskier than bonds. Indeed,  Table \ref{stats} shows that Conditional Expected Duration is larger for US Equity than it is for US Bonds. On the other hand however, consistent with Figure \ref{duration_distr}, both average duration and duration deviation are considerably larger for the less risky fixed income asset than for the more risky equities asset.

\begin{table}[t]
\centering 
\begin{tabular}{l || c c c c c c } 
\hline 
 & Volatility & ES$_{0.9}$ & CED$_{0.9}$ & $\E[\delta_m^{(X)}]$ & $\sigma_{\delta}$ & CE$_{\delta,0.9}$ \\ [0.5ex] % inserts table
\hline \hline
US Equity & 18.35\%  & 2.19\% & 47\% & 456 & 489 & 1323 \\ 
US Bonds &  5.43\%  & 0.49\% & 29\% & 976 & 590 & 1070 \\
\hline
\end{tabular}
\vspace{20pt}
\caption{Single-period and path-dependent risk statistics for daily US Equity and US Bond Indices over the period between January 1973 and December 2013, inclusive. Expected Shortfall (ES), Conditional Expected Drawdown (CED) and Conditional Expected Duration (CE$_\delta$) are calculated at the 90\% confidence level. $\E[\delta_m^{(X)}]$ and $\sigma_\delta$ are the mean maximum duration and deviation of maximum duration, respectively.} 
\label{stats} 
\end{table}

\begin{rem}[Spatial versus temporal drawdown]
Even though duration is theoretically defined independent of the drawdown magnitude, there is a close relationship between the temporal and the size dimensions of a cumulative drop in portfolio value. Figure \ref{mag_time} displays the daily time series of drawdown magnitude and its duration for each of US Equity and US Government Bonds. Clearly, a drawdown's magnitude is positively correlated to its duration. Therefore, even though some smaller drawdowns can stay under water a long period of time, empirically larger drawdowns tend to come with an extended duration. In practice, minimizing the convex risk of drawdown magnitude may in fact lead to a lower overall (non-convex) duration risk.
\end{rem}

\begin{figure}
\centering
  \includegraphics[width=\linewidth]{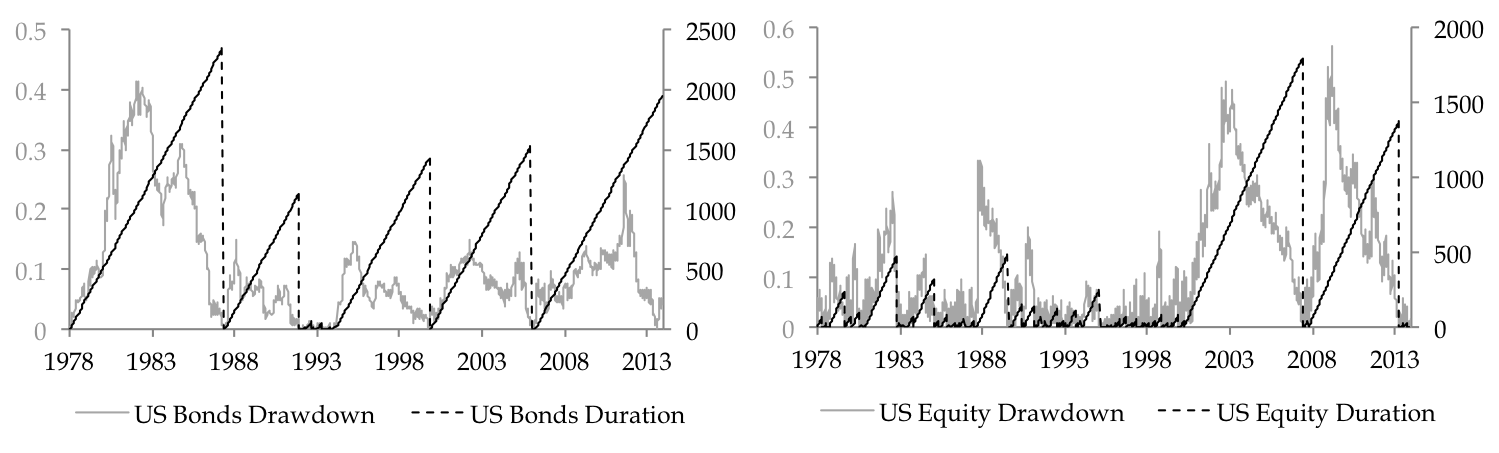}
\caption{Daily time series of historical drawdown (with scale in percentages on the left-hand side) and duration (with scale in trading days on the right-hand side) for US Bonds and US Equity over the period 1978--2013.}
\label{mag_time}
\end{figure}

\subsection{Duration risk and serial correlation}
We next show that duration risk captures temporal dependence to a greater degree than traditional one-period risk measures; this temporal dependence implies higher sensitivity to serial correlation. 

We use Monte Carlo simulation to generate an autoregressive AR(1) model:
$$ r_t = \kappa r_{t-1}  + \epsilon_t ,$$
with varying values for the autoregressive parameter $\kappa$, where $\epsilon$ is fixed to be Gaussian with variance 0.01. We then calculate volatility, Expected Shortfall, Conditional Expected Drawdown, and Conditional Expected Duration of each simulated autoregressive time series and list the results as a function of $\kappa$ (see Table \ref{kappa}). Both single-period risk measures are affected by the increase in the value of the autoregressive parameter. However, the increase is steeper for the two path dependent risk measures. We next use maximum likelihood to fit the same AR(1) model to the daily time series of US Equity and US Government Bonds on a 6-month rolling basis to obtain a time series of estimated $\kappa$ values for each asset. The correlations of the time series of $\kappa$  with the time series of 6-month rolling volatility, Expected Shortfall, Conditional Expected Drawdown, and Conditional Expected Duration are shown in Table \ref{kappa_corr}. For both assets, the correlation with the autoregressive parameter is relatively large for drawdown and duration compared to volatility and shortfall, and highest for duration.
 Finally, note that single-period and drawdown risk is consistently higher for US Equity than it is for US Bonds, whereas the opposite is true for duration risk.

These insights may impact how certain portfolio construction strategies are used in practice. Consider  for example the popular momentum trading strategy, which tends to generate relatively large returns that come equipped with some autocorrelation. Given this serial correlation, our work implies that such strategies would tend to suffer from high multi-period risk, in particular from drawdown and duration.

\begin{table}
\centering 
\begin{tabular}{l || r r r r r r r r r } 
\hline 
Kappa & 0.1 & 0.2 & 0.3 & 0.4 & 0.5 & 0.6 & 0.7 & 0.8 & 0.9 \\ [0.5ex]
\hline \hline
Volatility & 0.07  & 0.07 & 0.08 & 0.09 & 0.10 & 0.11 & 0.11 & 0.12 & 0.12 \\ 
ES$_{0.9}$ & 0.16  & 0.16 & 0.17 & 0.17 & 0.18 & 0.19 & 0.19 & 0.20 & 0.21 \\
CED$_{0.9}$ & 0.15  & 0.16 & 0.18 & 0.21 & 0.22 & 0.29 & 0.34 & 0.38 & 0.39 \\
CE$_{\delta,0.9}$ & 291  & 328 & 350 & 339 & 411 & 467 & 487 & 504 & 496 \\
\hline 
\end{tabular}
\vspace{20pt}
\caption{Volatility, 90\% Expected Shortfall, 90\% Conditional Expected Drawdown, and 90\% Conditional Expected Duration of a Monte Carlo simulated AR(1) model (with 10,000 data points) for varying values of the autoregressive parameter $\kappa$.} 
\label{kappa} 
\end{table}

\begin{table}
\centering 
\begin{tabular}{l || r r r r } 
\hline 
 & Volatility & ES$_{0.9}$ & CED$_{0.9}$ & CE$_{\delta,0.9}$\\ [0.5ex]
\hline \hline
US Equity & 0.45  & 0.52 & 0.70 & 0.81 \\ 
US Bonds &  0.32  & 0.39 & 0.67 & 0.85 \\
\hline 
\end{tabular}
\vspace{20pt}
\caption{For the daily time series of each of US Equity and US Government Bonds, correlations of estimates of the autoregressive parameter $\kappa$ in an AR(1) model with the values of the four risk measures (volatility, 90\% Expected Shortfall and 90\% Conditional Expected Drawdown, and 90\% Conditional Expected Duration) estimated over the entire period (1973--2013).} 
\label{kappa_corr} 
\end{table}

%%%%%%%%%%%%%%%%%%%%%%%%%%%%%%%%%%%%%%%%%%%%%%%%%%%%%%%%%%%%%%%%%%%%%%%%%%%%%%%%

\section{Conclusion}
\label{section:conclusion}

\subsection{Summary of contributions}
Multi-period measures of risk account for the path that the value of an investment portfolio takes. In the context of probabilistic risk measures, the focus has been on the spatial dimension of path-dependent risk and not on the dimension associated with the passage of time. By incorporating the time dimension into the framework of risk measurement, one can in practice form expectations about future temporal risk. In this paper, we formalized the theory of temporal risk measures and analyzed their properties.  We have thereby introduced a new risk diagnostic complementing traditional ones and uniquely capturing the risk associated with the passage of time. We introduced the notion of temporal transformation, a spatial invariant random variable mapping of a stochastic process to a random time, and the notion of temporal acceptance family and show that these two entities correspond bijectively. A temporal risk measure is then defined as a path-dependent risk measure which can be decomposed into a temporal transformation and a risk functional. We also showed that temporal measures of risk can never be coherent in the sense of \cite{Artzner1999}.
In the second part of the paper, we studied the temporal dimension of drawdown, its duration. Drawdown duration is a widely quoted risk diagnostic in the investment management industry but has not been studied before in the context of path-dependent measures of risk. We hence formulated duration as a temporal measure of risk and derived some of its properties. We then discussed some empirical properties of duration risk. In particular, we showed that duration captures serial correlation in the returns of two major asset classes. 

\subsection{A note on estimating temporal risk in practice}
Our study of a mathematically sound notion of the temporal dimension of path-dependent risk measures essentially yields a methodology for forming expectations about future potential time-related risk.  With this mathematical setup for temporal risk measures in place, a natural ensuing question is: how do we use this formalism to make statements about future path-dependent expectations in practice? The answer lies beyond the scope of this article, and we conclude by briefly pointing towards the need for and challenges in developing a sound path-dependent risk model.

Our empirical study, even though performed for illustrative purposes, is rather simple in its estimation methodology. We have used overlapping rolling windows to generate drawdown processes, maximum drawdown and duration. Current work in progress evaluates the comparable goodness of several other estimators, for example using the block bootstrap, based on their consistency, efficiency and robustness. Moreover, note that a recovery of a given stochastic process, which is under water, may take a relatively long time compared to the available data. This is indeed the case in Figure \ref{mag_time}, where recovery of the underlying processes takes several years. This points to the challenges with respect to data availability. The sample size for generating path dependent risk measures, and in particular drawdown duration, tends to be too small. One way to overcome this issue is to reset all drawdown processes every, say, year. This means that one would forget the historical drawdown from previous years. This direction is also currently being investigated in a large scale empirical study. We would like to stress, however, that discussions with practitioners from the investment management industry --- who are the key audience on the practical side interested in using our theoretical insights --- have pointed us to the convention that clearing the drawdown memory is not favoured in practice, and hence our current estimation methodology would be chosen in practice. \footnote{I thank an anonymous referee for pointing to this suggestion.}

More generally, the key to accurate path-dependent risk forecasts is a realistic scenario generation process representing the underlying returns. Consider for example the most basic parametric Gaussian model. Despite the evidence that the Gaussian viewpoint does not yield a realistic representation of asset returns, relying on the normality assumption continues to be standard in quantitative risk measurement and reporting. There are parametric alternatives to the normal model that account for heavy tails and skewness of portfolio returns. However, the challenges of developing a flexible, robust, multi-horizon parametric model that is diverse enough to be applied to a wide range of portfolios have led to the popularity of historical simulation. We refer the reader to \cite{Cont2001} for a summary of difficulties arising from parametric modelling of equity time series.

Historical simulation, on the other hand, certainly presents challenges of its own. The methodology assumes that the past accurately represents the future, while market conditions change over time. Moreover, the data required for historical simulation may not be available. Recently developed assets may have insufficient history, and external events and economic dynamics may lead to the insignificance of an asset's history.

These issues need to be addressed in an economically sound path-dependent risk model. Compared to single-period risk measures, path-dependency introduces additional challenges. In particular, models that account for this inherent temporal dependency tend to be more complicated to estimate, simulate, and backtest in practice. On the other modeling spectrum, consider the simplest form of empirical estimation: random sampling. Such a methodology fails to account for a notion of memory in time series of returns. Memory is a stylized fact incorporating the ideas that serial autocorrelation increases during turbulent markets, volatilities change over time, and high volatility regimes have a tendency to occur immediately following large drawdowns.  Alternatives to random sampling, such as the moving block bootstrap introduced by \cite{Kuensch1989}, have certainly been developed in statistical theory. A large literature on backtesting such parametric and empirical models in the context of forecasting path-dependent risk measures does, however, not seem to exist.

\bibliographystyle{plainnat}
\bibliography{drawdown_references2}

\end{document}